\newtheorem{defn}{\bf Definition}
\newtheorem{thm}{\bf Theorem}
\newtheorem{prop}[thm]{\bf Proposition}
\newcommand{\tmR}{\tm{R}}
\newcommand{\tmF}{\tm{F}}
\renewcommand{\alert}[1]{#1}
\title{Achieving Large Sum Rate and Good Fairness in MISO Broadcast Communication}
\author{Ji-You Huang and Hsiao-feng Francis  Lu
\thanks{H. F. Lu is with the Department of Electrical and Computer Engineering, National Chiao Tung University, Hsinchu, Taiwan (e-mail:francis@mail.nctu.edu.tw). The research of H. F. Lu was funded in part by Taiwan Ministry of Science and Technology under Grants MOST 106-2221-E-009 -024 -MY3 and MOST 107-2918-I-009-011. }
}
\begin{document}
\bibliographystyle{ieeetran}
\allowdisplaybreaks
\maketitle

\begin{abstract}
A tradeoff between sum rate and fairness for MISO broadcast communication employing dirty paper coding or zero-forcing dirty paper coding at physical layer is investigated in this paper. The tradeoff is based on a new design objective termed "tri-stage" approach as well as a new $\ell_1$-based fairness measure that is much more robust than the well-known Jain's index for comparing fairness levels achieved by various design objectives at a much finer resolution in high SNR regime. The newly proposed tri-stage design also introduces a new concept of statistical power allocation that randomly allocates powers to users based on an optimal probability distribution derived from the tradeoff between sum rate and fairness. Simulation results show that the proposed approach can simultaneously achieve a larger sum rate and better fairness than the reputable proportional fairness criterion. A performance upper bound is also given in the paper to show that the excellent performance of the proposed approach at moderate and high SNR regimes as well as some potential for further improvement in low SNR regime. 
\end{abstract}

\begin{IEEEkeywords}
Broadcast communication, MISO, dirty paper coding, zero-forcing beamforming, fairness. 
\end{IEEEkeywords}

\section{Introduction} \label{sec:intro}

In the downlink of wireless communication, a multi-antenna transmitter could send information simultaneously to multiple single-antenna users. Such communication channel is commonly referred to as the {\em multiple-input single-output} (MISO) {\em broadcast channel} (BC) in the literatures \cite{IT_CaireS03}. It is the dual of SIMO (single-input multiple-output) multiple-access channel (MAC) \cite{VishwanathJG03}, where multiple single-antenna users send information simultaneously to a common multi-antenna receiver. The MISO BC has appeared widely not only in traditional mobile communications, but also in the latest Internet of Things (IoT) and device-to-device (D2D) communication systems \cite{conf_HuKHR16}. 

When broadcasting information to all users, the transmitter could apply the well-known dirty-paper-coding (DPC) scheme \cite{Costa83,ErezB05} for encoding  messages, provided that 1) a certain ordering of users has been established a priori, 2) each user has a perfect knowledge of the channel state information (CSI) of his/her incoming channels, and 3) the transmitter has a complete knowledge of the CSI of all users. The second requirement is commonly referred to  as the CSIR --- CSI at receiver --- and can be achieved through channel estimation; the last requirement is coined as the CSIT --- CSI at transmitter --- in the literature and can be realized by using a feedback channel from the users to the transmitter. These requirements can be easily achieved, as the mechanisms for channel estimation and feedback already exist in modern communication systems \cite{IEEE802.11mac}. 
Armed with the ordering and complete knowledge of CSI, the DPC strategy successively encodes each user's message taking into account the noncausal knowledge of interference signals caused by preceding users. The scheme then converts the broadcast channel into a special kind of the {\em Gelfand-Pinsker channel}  \cite{Gelfand1980} with states non-causally known at the transmitter. The DPC strategy not only achieves the capacity of the Gelfand-Pinsker channel \cite{Costa83, book_Kramer08}, but also turns out to be optimal for MIMO BC. To elaborate, Caire and Shamai \cite{IT_CaireS03} investigated the capacity region of the two-user BC when the base station has  arbitrary number of transmit antennas  and each user has only single receive antenna. They showed through direct calculation that the DPC is optimal in terms of achieving the sum capacity of the two-user MISO BC. Weingarten \etal \cite{conf_WSS04,IT_Wei06} studied the capacity region of Gaussian MIMO BC  based on the notion of an {\em enhanced broadcast channel} under a wide range of input constraints, including the total power and per-antenna constraints. They showed that the capacity region coincides with the DPC rate region. An alternative proof for the capacity region of degraded Gaussian MIMO BC without using the notion of enhanced channel can be found in \cite{EkremU12}. 

Many iterative algorithms \cite{JindalRVJG05,Yu06a} have been proposed to find the optimal coding (beamforming) vector associated to each user in the DPC strategy, aiming to maximize the overall sum rate. These algorithms are based on the duality between the MAC and BC \cite{SchubertB04} and suffer from a relatively high computational complexity. In \cite{IT_CaireS03} Caire and Shamai proposed a suboptimal transmission strategy, termed {\em zero-forcing DPC} (ZFDPC), that combines both the advantages of zero-forcing beamformer and DPC for MISO BC, when the number of transmit antennas at base station exceeds or equals the number of single-antenna users. The ZFDPC eventually decomposes the MISO BC into a group of parallel interference-free channels and simplifies the problem of finding optimal coding vectors, but is at a cost of certain capacity loss.  Since then, the idea of combining zero forcing and DPC has been applied to many other communication problems. For instance, Dabbagh and Love \cite{DabbaghL07} extended the work in \cite{IT_CaireS03} and proposed a successive ZFDPC encoder for the MIMO broadcast channel, i.e. when the users are equipped with multiple receive antennas. Mohammed and Larsson \cite{MohammedL16} proposed a user-group based ZFDPC precoder by splitting users into disjoint groups. Hu and Rusek \cite{HuR17} considered a generalized zero-forcing beamforming that is not only orthogonal to the channel vectors of the succeeding users but also to part of those of the preceding users, thereby yielding a generalized ZFDPC strategy, where the DPC encoder only has to take into account the multiuser interference caused by a small constant number of immediately preceding users. 

All the above works share a common objective, namely, maximizing the sum of transmission rates of users \cite{book_Kramer08}, and care less whether the scheme is equally beneficial to the individuals. In other words, there can be two (contradicting) objectives for designing communication schemes for BC, one from the transmitter's viewpoint, i.e. sum rate maximization (as all the above schemes do), and the other from the viewpoint of each user, i.e. fairness maximization. Generally speaking, the former objective can be mathematically and quantitively expressed using the formula of achievable rates of the communication scheme used, say DPC, ZFDPC, ZF beamforming, etc., but the notion of fairness is unfortunately much harder to be quantified. Several conceptual, philosophical and qualitative definitions of fairness, such as proportional fairness  \cite{Kelly98}, harmonic mean fairness \cite{LuoZ08}, max-min fairness  \cite{MoW00}, etc. have been proposed in the literature, each holding a different opinion regarding how it means to be fair. There are also some quantitive measures for fairness in the literatures. Plausible fairness measures are generally required to satisfy axioms such as continuity, homogeneity, asymptotic saturation, and monotonicity \cite{conf_LanKCS10}. Examples of such fairness measures are the entropy-based index \cite{conf_UchidaK09},  Jain’s fairness index \cite{Jain84}, $\alpha$-fairness from networking research community \cite{Kelly98,MoW00}, and a much more complicated construction \cite{conf_LanKCS10}  that includes many existing measures as special cases. In particular, $\alpha$-fairness measure can be used to justify some of the aforementioned  qualitative approaches for fairness  by varying the parameter $\alpha$. 
For instance, setting $\alpha=0$ yields the aim of sum rate maximization \cite{conf_LanKCS10}. Setting $\alpha=1$ gives the proportional fairness criterion, and the case of $\alpha=\infty$ corresponds to the max-min fairness. 
Studies of tradeoffs between sum rate and fairness also appear in literatures.  In \cite{SediqGSY13}, Sediq \etal investigated such tradeoff at network level based on Jain's index and $\alpha$-fair utility \cite{MoW00}. In particular, the transmission rates in \cite{SediqGSY13} were replaced by the numbers of resource blocks allocated to each user in the downlink of wireless networks using OFDM, thereby yielding an orthogonal communication. Such communication scheme is extremely suboptimal in terms of maximal achievable rates from the viewpoint of multi-terminal information theory \cite{Cover,  book_Kramer08, IT_CaireS03, IT_Wei06, EkremU12}, since orthogonal schemes have sum degrees of freedom always equal to one, regardless of the increase of transmit antennas and users. It then leaves a significant room for improving the tradeoff between rates and fairness. 

In this paper, we will investigate the tradeoff between the sum rate and fairness for MISO BC at physical layer by employing communication schemes such as DPC or ZFDPC to guarantee the close-to-capacity performance. In addition, we will aim to provide a systematic design that can offer not only a good sum rate but also a reasonable fairness to all users at the same time. To this end, we will first review in Section \ref{sec:channel} the system model of MISO BC as well as the DPC and ZFDPC strategies. In Section \ref{sec:qual} we will discuss several commonly used design objectives derived from the qualitative notions of fairness.  Quantitive fairness measures will be discussed in Section \ref{sec:fair}. In particular, it will be seen that the Jain's index, though widely accepted as a fairness measure, behaves less sensitive to the fluctuations of transmission rates in high SNR regime. An alternative measure based on $\ell_1$-norm will be proposed in Section \ref{sec:fair} for replacement. The new measure is easily computable and satisfies almost all axioms listed in \cite{conf_LanKCS10}. In Section \ref{sec:tri} we will present the new design objective, termed {\em tri-stage}, which takes into account both the qualitative and quantitive aspects of fairness. As the name suggests, the proposed approach consists of three stages, where the first two stages aim to obtain a tradeoff between sum rate and fairness as well as a byproduct which will be discussed next. Note that the tradeoff is just a function relating sum rate to fairness and says nothing about which pair of sum rate and fairness should be chosen for operation. Choosing the operating pair can be philosophically hard. For instance, it is arguable to allege that scheme A having a sum rate of 10 bits per channel use and 90 per cent fairness is better than scheme B having a sum rate of 11 bits per channel use and 80 per cent fairness, and vice versa. One might assert that schemes A and B are equally good as they are both {\em Pareto optimal points} on the sum rate-fairness tradeoff curve from the viewpoint of operational research. Yet, it would be universally agreed that another Pareto optimal, equally good scheme C having a sum rate of 12 bits per channel use and 10 per cent fairness should be totally unacceptable. Thus, we will turn to the qualitative notion of fairness to decide the operating pair in the third stage of the proposed design. Achieving the sum rate and fairness of the chosen operating point makes use of the byproduct obtained in the first two stages: a new concept of statistical power allocation, which is in sharp contrast to the fixed, deterministic method used in all existing wireless/wired communication systems. The new scheme randomly --- based on an optimal probability distribution derived from the tradeoff --- allocates powers to users, thereby offering not only a larger sum rate but also a better fairness than the existing designs. Several simulation results will be provided in Section \ref{sec:sim} to justify the excellent performance of the proposed approach. Concluding remarks are given in Section \ref{sec:con}. 

The following notations have been used in this paper. Underlined lowercase letter $\underline{x}$ represents a vector, and uppercase letter $A$ denotes a matrix of certain size. $A^\dag$ (resp. $A^\top$) denotes the Hermitian transpose (resp. transpose) of matrix $A$, and $\norm{A}_p$ denotes its $\ell_p$ norm for some $p \geq 1$. $I_n$ is the $(n \times n)$ identity matrix. $\< \_{a}, \_{b}\>$ is the usual Euclidean inner product for $\_{a}, \_{b} \in \R^n$.  Matrix inequalities such as $\succeq$, $\preceq$, $\succ$ and $\prec$, are the partial orderings of positive semi-definite matrices \cite[Section 7.7]{Horn}. We say $\underline{x} \sim \CN{\underline{m}}{K}$ when $\underline{x}$ is a circularly symmetric complex Gaussian random vector with mean $\underline{m}$ and covariance matrix $K$.

\section{The MISO Broadcast Channel} \label{sec:channel}

Consider a $K$-user MISO broadcast channel, where each user has only one antenna and the base-station has $N$ antennas. Assume that with a codebook ${\cal X} \subset \C^N$ the base-station transmits $\_{x} \in {\cal X}$  simultaneously to all users, subject to an average power constraint $\E \norm{\_{x}}_2^2 \leq P$. With an implicit ordering of users, the signal received by the $k$-th user is given by
\beq
y_k = \_{h}_k^\top \_{x} + w_k, \label{eq:1}
\eeq
where $\_{h}_k \in \C^N$ is the channel vector from the base-station to the $k$-th user, and $w_k \in \CN{0}{1}$ is the additive complex Gaussian noise associated with the channel. Assume further that the transmitted signal can be decomposed into
\beq
\_{x} = \_{x}_1 + \_{x}_2 + \cdots + \_{x}_K, \label{eq:2}
\eeq
where $\_{x}_k$ is the signal intended for the $k$-user and satisfies $\E \norm{\_{x}_k}_2^2 \leq p_k$. The signals $\_{x}_k$ are statistically uncorrelated, so we have $p_1 + \cdots + p_K \leq P$, i.e. the {\em total power constraint}. We will assume throughout the paper that the channel vectors $\{ \_{h}_k\}$ are known perfectly to the base-station, but each user, say user $k$, knows only the channel vector $\_{h}_k$ of his/her own incoming channel. Substituting \eqref{eq:2} into \eqref{eq:1} gives an alternative expression of the received signal of user $k$ \bea
y_k &=& \sum_{i < k} \_{h}_k^\top \_{x}_i + \_{h}_k^\top \_{x}_k + \sum_{i>k} \_{h}_k^\top \_{x}_i + w_k \no\\
&=& \_{h}_k^\top \_{s}_k  + \_{h}_k^\top \_{x}_k +  z_k, \label{eq:3}
\eea
where $\_{s}_k$ is the interference caused by preceding users and $z_k$ is the overall noise consisting of the signals of succeeding users and the Gaussian noise $w_k$. Specifically, we have 
\bea
\_{s}_k &:=& \sum_{i < k} \_{x}_i\ \\
z_k &:=& \sum_{i>k} \_{h}_k^\top \_{x}_i + w_k. 
\eea

\subsection{Coding Strategies} 
When the channel vectors $\{ \_{h}_k\}$ are all known completely to the transmitter, i.e. the full CSIT scenario, it is known that the capacity region of the MISO BC --- with respect to the specific ordering of users --- can be achieved by coding strategies such as DPC \cite{conf_WSS04,IT_Wei06},  which encodes the message of user $k$ taking into account the noncausal knowledge of interference signal $\_{s}_k$ caused by preceding users. Specifically, assume the signal vectors $\_{x}_k$'s  are all complex Gaussian, i.e. $\_{x}_k \sim \CN{\_{0}}{K_k}$ with $\tr(K_k) \leq p_k$; then covariance matrix for $\_{s}_k$ is 
\beq
S_{k} := \E \_{s}_k \_{s}_k^\dag = \sum_{i < k} K_i,
\eeq
as the $\_{x}_k$'s are uncorrelated by hypothesis. Focusing on the $k$-th user, the standard approach of DPC uses the following auxiliary random vector 
\beq
\_{v}_k := \_{x}_k + (\_{h}_k^\top \_{s}_k) \_{\beta}_k \label{eq:vk}
\eeq
to construct a random codebook for user $k$, where 
\beq
\_{\beta}_k := \frac{1}{1 + \_{h}_k^\dag \left( \sum_{i \geq k} K_i \right) \_{h}_k} K_k \_{h}_k^*
\eeq
is chosen such that $\_{v}_k - y_k \_{\beta}_k$ and $y_k$ are statistically uncorrelated. 

Regarding $\_{s}_k$ as the side-information known to the transmitter (i.e. the base-station encoder), but not to the receiver (i.e. the decoder of the $k$-th user), the maximal achievable rate of the $k$-th user equals that of the Gelfand-Pinsker channel and is given by 
\bea
R_{\tm{DPC},k} & = & I(\_{v}_k; y_k) - I(\_{v}_k; \_{s}_k) \no \\
&=& \log_2 \left( 1+\frac{ \_{h}_k^\dag K_k \_{h}_k}{1+\_{h}_k^\dag \left( \sum_{i > k} K_i \right) \_{h}_k} \right) \label{eq:9}
\eea
in bits per channel use. Thus, the maximal sum rate achieved by DPC in  MISO BC with respect to the specific ordering of users equals 
\beq
\begin{array}{lll}
R_{\tm{DPC,sum}} := & \tm{maximize} & \sum_k R_{\tm{DPC},k} \\
& \tm{subject to} & K_1, \ldots, K_K \succeq {\bf 0}\\
& & \tr( K_k) \leq p_k
\end{array} \label{eq:R_DPC_sum}
\eeq
which could be further increased by optimizing over all possible $K!$ orderings of users. 
Several iterative algorithms \cite{JindalRVJG05,Yu06a} have been proposed to tackle the complicated optimization problem \eqref{eq:R_DPC_sum}, but the required computational complexity is generally high. 

Caire and Shamai \cite{IT_CaireS03} proposed a much simpler but suboptimal transmission strategy, termed {\em zero-forcing DPC} (ZFDPC), that combines both advantages of zero-forcing beamformer and DPC for MISO BC when $K \leq N$ (the situation when the base station has more transmit antennas than the number of single-antenna users).  Given a pre-determined ordering of users, the ZFDPC encoder uses zero-forcing beamformer to ensure that the coding (beamforming) vector of each user is orthogonal to the channel vectors of all preceding users, thereby avoiding the interference caused by succeeding users. At the same time, the ZFDPC applies the DPC scheme to eliminate the multiuser-interference caused by preceding users. Mathematically, the ZFDPC can be formulated as a special case of DPC when restricting to one-dimensional signaling for all users, i.e. the case when the covariance matrices $K_k$ are of rank 1 and are decomposed as $K_k = p_k \_{q}_k \_{q}_k^\dag$, for some unit-modular beamforming vector $\_{q}_k$. Consequently, the signal vector of the $k$-th user takes the following form
\beq
\_{x}_k = \sqrt{p_k} x_k \_{q}_k \label{eq:11}
\eeq
for some $x_k \in \C$ satisfying the average power constraint $\E \abs{x_k}^2 \leq 1$. The beamforming vector $\_{q}_k$ is required to be orthogonal to the channel vectors of all preceding users, i.e. it satisfies
\beq
\ell_{i,k} := \_{h}_i^\top \_{q}_k = 0 \quad \tm{ for all $i < k$}. \label{eq:12}
\eeq 
Substituting \eqref{eq:11} and \eqref{eq:12} into the channel input-output equation \eqref{eq:3} shows that the received signal of the $k$-th user is 
\bea
y_k &=& \sum_{i < k} \sqrt{p_i} \_{h}_k^\top \_{q}_i x_i + \sqrt{p_k} \_{h}_k^\top \_{q}_k x_k + w_k \no\\
&=& \sum_{i < k} \sqrt{p_i} \ell_{k,i} x_i + \sqrt{p_k} \ell_{k,k} x_k + w_k. \label{eq:13}
\eea
The first term $\sum_{i < k} \sqrt{p_i} \ell_{k,i} x_i$ of \eqref{eq:13} can be seen as the interference caused by preceding users and can be eliminated by employing DPC  at transmitter in a form similar to \eqref{eq:vk}. Thus, the ZFDPC literally decomposes the MISO BC into a group of parallel interference-free channels, at a cost of certain capacity loss. The values of $\{\_{q}_k\}$ and $\{ \ell_{k,i}\}$ can be easily (and optimally) determined as follows. Let 
\beq
H = \left[
\begin{array}{c}
\_{h}_1^\top\\
\vdots \\
\_{h}_K^\top
\end{array} \right] \label{eq:H}
\eeq
be the overall $(K \x N)$ channel matrix that is completely known to the base-station. Then the optimal choices of $\{\_{q}_k\}$ and $\{ \ell_{k,i}\}$ are given by the QR-decomposition of matrix $H^\dag$;  say $H^\dag = Q L^\dag$, where 
\[
Q = \left[ \_{q}_1 \ \cdots \ \_{q}_K \right]
\]
is an $(N \x K)$ matrix with orthonormal columns, and $L = [\ell_{k,i}]$ is a lower triangular matrix of size $(K \x K)$. The desired signaling vectors $\_{q}_k$ are exactly the column vectors of $Q$ and are proved to be optimal for arbitrary performance measure, whenever the total power constraint is enforced \cite{TuB03, JiangBT06, WieselES08, TranJBO13}. 

To summarize, the ZFDPC  strategy first converts the problem of MISO broadcast communication into that of SISO broadcast communication, by restricting to one-dimensional signaling and $\rank(K_k)=1$, at the cost of a certain loss in the maximal achievable rates.  The SISO broadcast channel is then further ramified --- with help from beamforming and DPC --- into a set of parallel interference-free SISO point-to-point channels, undertaking another loss in capacity, due to the constraints \eqref{eq:12} imposed to eliminate the interference signals from succeeding users. Compared to the DPC (cf. \eqref{eq:9}), the maximal achievable rate of the $k$-th user using ZFDPC is given by 
\beq
R_{\tm{ZFDPC},k} = \log_2 \left( 1 + p_k \abs{\ell_{k,k}}^2 \right), \label{eq:15}
\eeq
provided that the $x_k$'s (cf. \eqref{eq:11}) are encoded by a SISO DPC. Despite a certain capacity loss, it has been shown in \cite{IT_CaireS03} that the ZFDPC remains asymptotically optimal in high and low SNR regimes, provided that the $\{ \_{h}_k\}$ are linearly independent. 

\subsection{Qualitative Approaches of Fairness}  \label{sec:qual} 

It can be seen from \eqref{eq:9} and \eqref{eq:15} that the achievable rates of either DPC or ZFDPC are functions of individual powers $p_1, \ldots, p_K$. Thus, how to distribute the total power $P$ to each user calls for a specific design-objective. Many objectives have been formulated and proposed in the literature, each involving a certain qualitative consideration of fairness. Below we will briefly review four objectives commonly used in the field of wireless network communications.  For simplicity, these objectives will be presented in the form of ZFDPC, and we will write the achievable rate of the $k$-user (cf. \eqref{eq:15}) as $R_{\tm{ZFDPC},k}(\_{p})$, where $\_{p}=[p_1, \ldots, p_K]^\top \in (\R^+)^K$, to emphasize the dependence upon power distribution. These objectives can also be easily rephrased for DPC --- simply replacing $R_{\tm{ZFDPC},k}(\_{p})$ with $R_{\tm{DPC},k}(\_{p})$ (in (cf. \eqref{eq:9}) subject to additional constraints $\tr(K_k) \leq p_k$ for $k=1, \ldots, K$. 

The first design-objective seeks to maximize the overall sum rate and is formulated as 
\beq
\begin{array}{lll}
R_{\tm{ZFDPC,ms}} = & \underset{\_{p}}{\tm{maximize}}  & \sum_{k} R_{\tm{ZFDPC,k}}(\_{p}) \\
& \tm{subject to} & \_{\bf 1}^\top \_{p} \leq P, \ \_{p} \succeq \_{0}.
\end{array} \label{eq:ms}
\eeq
This objective has no concern of fairness among the users; it only aims to maximize the sum of transmission rates of all users, which is an ordinary goal in conventional multi-terminal information theory \cite{book_Kramer08}. 

The second design-objective, termed {\em proportional fairness} \cite{Kelly98, BertsimasFT11}, seeks an  optimal generalized Nash solution to the $K$-player problem \cite{Nash50}, where a power distribution $\_{p}_{\tm{pf}}$ is said to be optimal if replacing $\_{p}_{\tm{pf}}$ with any other $\_{p}$ results in a negative aggregate proportional value of each user's rate, i.e.
\beq
\sum_k \frac{R_{\tm{ZFDPC},k}(\_{p}) - R_{\tm{ZFDPC},k}(\_{p}_{\tm{pf}})}{R_{\tm{ZFDPC},k}(\_{p}_{\tm{pf}})} \leq 0. \label{eq:17}
\eeq
Such design-objective follows directly from the Nash standard for fairness: a transfer of resources (i.e. power in our scenario) among  users is favorable and fair if the sum of the percentage increases of each user's rate is positive. The optimal distribution $\_{p}_{\tm{pf}}$ can be found by solving the following optimization problem \cite{BertsimasFT11}
\beq
\begin{array}{lll}
\_{p}_{\tm{pf}} =  \arg 
&  \underset{\_{p}}{\tm{maximize}} & \sum_{k} \log \left( R_{\tm{ZFDPC,k}}(\_{p})\right)\\
& \tm{subject to} & \_{\bf 1}^\top \_{p} \leq P, \ \_{p} \succeq \_{0}
\end{array} \label{eq:pf}
\eeq
and yields the following sum rate for ZFDPC
\beq
R_{\tm{ZFDPC,pf}} = \sum_k R_{\tm{ZFDPC,k}} ( \_{p}_{\tm{pf}}).
\eeq

The third frequently used design-objective is to maximize the ``average'' of the rates of users. Such ``average'' should not be the usual arithmetic mean, for otherwise the objective would be equivalent to the maximal sum rate given in the first design-objective. Instead, researchers resort to the {\em harmonic mean} \cite{LuoZ08}, and the design-objective is to seek a power distribution that maximizes the harmonic mean of the rates, i.e.
\beq
\begin{array}{lll}
\_{p}_{\tm{hm}} 
 = \arg & \underset{\_{p}}{\tm{maximize}} & \left(\sum_{k} \frac{1}{  R_{\tm{ZFDPC,k}}(\_{p})  }\right)^{-1}\\
& \tm{subject to} &  \_{\bf 1}^\top \_{p} \leq P,\ \_{p} \succeq \_{0} 
\end{array}\label{eq:hm}
\eeq
and the resulting sum rate is 
\beq
R_{\tm{ZFDPC,hm}} = \sum_k R_{\tm{ZFDPC,k}} ( \_{p}_{\tm{hm}}).
\eeq

The last commonly used design-objective is termed {\em max-min} criterion introduced by Kalai and Smorodinsky \cite{KalS75}. It aims to maximize the lowest rate among all users, thereby improving the worst-case performance. The criterion can be easily formulated as 
\beq
\begin{array}{lll}
\_{p}_{\tm{mm}} 
 = \arg & \underset{\_{p}}{\tm{maximize}} & \min_k R_{\tm{ZFDPC,k}}(\_{p}) \\
& \tm{subject to} &  \_{\bf 1}^\top \_{p} \leq P, \ \_{p} \succeq \_{0}
\end{array} \label{eq:mm}
\eeq
and the resulting sum rate is 
\beq
R_{\tm{ZFDPC,mm}} = \sum_k R_{\tm{ZFDPC,k}} ( \_{p}_{\tm{mm}}).
\eeq

We remark that all the four optimization problems, \eqref{eq:ms}, \eqref{eq:pf}, \eqref{eq:hm} and \eqref{eq:mm}, can be easily solved by standard convex-optimization techniques. Solution to the last design-objective, i.e. max-min criterion, is particularly simple and has an analytical form. The optimum is achieved when all rates are equal, i.e. 
\beq
R_{\tm{ZFDPC,1}}(\_{p}_{\tm{mm}}) = \cdots = R_{\tm{ZFDPC,K}}(\_{p}_{\tm{mm}}), \label{eq:eqrate}
\eeq
hence the power $p_{\tm{mm},k}$ allocated to the $k$th user is given directly by
\beq
p_{\tm{mm},k}  = \frac{P}{\abs{\ell_{k,k}}^2}  \left( \sum_k \frac{1}{\abs{\ell_{k,k}}^2} \right)^{-1}. \label{eq:49}
\eeq

\subsection{Quantitive Approaches of Fairness} \label{sec:fair}

Each of the design objectives discussed in the previous section has its own take of fairness. Without a common and quantitive measure, it is unlikely to tell which design-objective is better, subject to a constraint on the overall sum rate. The sum rate constraint is particularly important; without it the max-min criterion would surely be the fairest, as all rates are equal (cf. \eqref{eq:eqrate}), but at the same time it has the lowest sum rate among the four. This is of course generally not preferred in wireless communications. 

Generally accepted quantitive measures for fairness include entropy-based index \cite{conf_UchidaK09}, Jain's index \cite{Jain84} and $\alpha$-fairness. For simplicity, here we will focus only on Jain's index, while all upcoming discussions can be easily reformulated in terms of other fairness measures. The Jain's index satisfies all the conditions in \cite{conf_LanKCS10} for being a plausible fairness measure and has been widely used in many areas, including wireless communication. The formal definition of Jain's index is reproduced below. 
\begin{defn}
Given  a set of achievable rates $\{ R_k: k=1, \ldots, K\}$, the corresponding Jain's index for fairness is 
\beq
J(\_{\gamma}) = \frac{\left( \sum_k R_k \right)^2}{K \sum_k R_k^2} = \frac{1}{K} \frac{1}{\norm{\_{\gamma}}_2^2} \label{eq:jain}
\eeq
where 
\beq
\_{\gamma} = \left[ \frac{R_1}{\sum_k R_k}, \cdots,  \frac{R_K}{\sum_k R_k} \right]^\top \label{eq:normal}
\eeq
is the normalized rate vector.  
\qed
\end{defn}

It should be noted that in \eqref{eq:jain} we have reformulated  the Jain's index in terms of the normalized rate vector $\_{\gamma}$, rather than the actual rates $ R_k$'s that generally appear in literatures \cite{Jain84, SediqGSY13, ShiPON14}. The new formulation in $\_{\gamma}$ actually gives a better insight into how the function $J(\_{\gamma})$ measures fairness in general --- much more than simply being the ratio of the squared first moment to the second moment. \alert{To this end, note that we have $\frac1K \leq J(\_{\gamma}) \leq 1$ for any set of rates, where the upper bound represents the case of 100 per cent fairness, and the equality holds if and only if all rates are equal, i.e. when the normalized rate vector equals $\_{e} = \frac{1}{K} \_{\bf 1}$. The lower bound $\frac 1K$ of $J(\_{\gamma})$ comes from the fact that $\norm{\_{\gamma}}_2 \leq \norm{\_{\gamma}}_1 = 1$ and \eqref{eq:jain}. }
The proposition below then shows that the Jain's index $J(\_{\gamma})$ can be related to the geometric angle between vectors $\_{\gamma}$ and $\_{e}$: the smaller the angle is, the more similar $\_{\gamma}$ is to $\_{e}$; hence the fairer the rates are to the users, and the closer to value 1 the Jain's index should be. 

\begin{prop} \label{prop:1}
Given the normalized rate vector $\_{\gamma}$, let $\theta$ be the angle between $\_{\gamma}$ and $\_{e} = \frac{1}{K} \_{\bf 1}$ in the Euclidean inner product vector space $\R^K$. Then the Jain's index can be alternatively defined as 
\beq
J(\_{\gamma}) = \abs{ \cos(\theta)}^2. \label{eq:28}
\eeq
\end{prop}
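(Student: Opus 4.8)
The plan is to prove the identity directly from the definition of the cosine of the angle between two vectors in the Euclidean inner product space $\R^K$. Recall that for nonzero vectors $\_{\gamma}$ and $\_{e}$, the angle $\theta$ between them satisfies $\cos(\theta) = \frac{\<\_{\gamma},\_{e}\>}{\norm{\_{\gamma}}_2 \norm{\_{e}}_2}$. So it suffices to compute the three quantities $\<\_{\gamma},\_{e}\>$, $\norm{\_{\gamma}}_2$, and $\norm{\_{e}}_2$ and substitute.

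First I would compute the inner product. Since $\_{e} = \frac{1}{K}\_{\bf 1}$, we have $\<\_{\gamma},\_{e}\> = \frac{1}{K}\sum_k \gamma_k = \frac{1}{K}\norm{\_{\gamma}}_1$. But $\_{\gamma}$ is the normalized rate vector, so its entries are nonnegative and sum to $1$; hence $\norm{\_{\gamma}}_1 = 1$ and $\<\_{\gamma},\_{e}\> = \frac{1}{K}$. Next, $\norm{\_{e}}_2^2 = \frac{1}{K^2}\sum_k 1 = \frac{1}{K}$, so $\norm{\_{e}}_2 = \frac{1}{\sqrt{K}}$. Putting these together, $\cos(\theta) = \frac{1/K}{\norm{\_{\gamma}}_2 \cdot (1/\sqrt{K})} = \frac{1}{\sqrt{K}\,\norm{\_{\gamma}}_2}$, and therefore $\abs{\cos(\theta)}^2 = \cos^2(\theta) = \frac{1}{K\norm{\_{\gamma}}_2^2}$, which is exactly $J(\_{\gamma})$ by the second expression in \eqref{eq:jain}. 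That completes the proof.

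There is essentially no hard step here; the argument is a one-line substitution once the three inner-product quantities are written down. The only point deserving a word of care is that $\cos(\theta)$ is already nonnegative — because $\_{\gamma}$ has nonnegative entries and $\_{e}$ is a positive vector, the inner product $\<\_{\gamma},\_{e}\> = 1/K > 0$ — so the absolute value in \eqref{eq:28} is redundant but harmless; I would remark on this to reassure the reader that the two formulations genuinely coincide and that $\theta \in [0,\pi/2]$. I would also note in passing that the identity makes transparent the earlier observations: $J(\_{\gamma}) = 1$ iff $\theta = 0$ iff $\_{\gamma}$ is parallel to $\_{e}$ (equal rates), and the Cauchy–Schwarz inequality $\abs{\cos\theta} \leq 1$ together with $\norm{\_{\gamma}}_2 \geq \norm{\_{\gamma}}_2^2/\norm{\_{\gamma}}_1$-type bounds recovers the range $\frac1K \leq J(\_{\gamma}) \leq 1$.
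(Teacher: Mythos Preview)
Your proof is correct and follows essentially the same approach as the paper: both arguments write $\abs{\cos(\theta)}^2 = \dfrac{\abs{\langle \underline{\gamma}, \underline{e}\rangle}^2}{\norm{\underline{e}}_2^2 \norm{\underline{\gamma}}_2^2}$, observe that $\langle \underline{\gamma}, \underline{e}\rangle = \frac{1}{K}$ and $\norm{\underline{e}}_2^2 = \frac{1}{K}$, and conclude. Your version is simply more detailed, with the added (correct) remark that $\cos(\theta) \geq 0$ so the absolute value is superfluous.
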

\begin{proof}
It follows directly from the definition of cosine in $\R^K$, namely,
\[
\abs{ \cos(\theta)}^2 = \frac{\abs{\< \_{\gamma}, \_{e}\>}^2}{\norm{\_{e}}_2^2 \norm{\_{\gamma}}_2^2} =  \frac{1}{K} \frac{1}{\norm{\_{\gamma}}_2^2},
\]
where the last equality is due to $\< \_{\gamma}, \_{e}\>=\frac{1}{K}$ and $\norm{\_{e}}_2^2 = \frac{1}{K}$. 
\end{proof}
\mvs

Unfortunately,  Proposition \ref{prop:1} also reveals a disadvantage of Jain's index when used as the fairness measure, despite satisfying all the axioms listed in \cite{conf_LanKCS10}. To see this, note that from Taylor's expansion of $\cos(\theta)$ in \eqref{eq:28} we have \footnote{For ease of discussion, the standard $\Theta$ notation is introduced here. We say $f(\theta)=\Theta(g(\theta))$ as $\theta \to 0$  if there exist constants $k_1$ and $k_2$, with  $0 < k_1 < k_2 < \infty$, such that $k_1 \abs{g(\theta)} \leq \abs{f(\theta)} \leq k_2 \abs{g(\theta)}$ as $\theta \to 0$. }
\[
J(\_{\gamma}) = 1 - \theta^2 + \Theta(\theta^4)
\]
for small values of $\theta$, i.e. $\abs{\theta} \ll 1$. When $\_{\gamma}$ is close to $\_{e}$ (with respect to any normed distance in $\R^K$), we have $\cos(\theta)$ close to $1$ and $\theta$ close to $0$; then the Jain's index $J(\_{\gamma})$ becomes insensitive to the variation of $\theta$ because of the quadratics. Such phenomenon is particularly pronounced in the high SNR regime, where the total power $P$ is large and all users have similar rates for transmission. In other words, when $P$ is large, all the approaches for qualitative fairness discussed in the previous section, either max sum rate, proportional fairness, harmonic-mean or max-min, have the same form of asymptotic rate $\log_2(P)+O(1)$ for each user. 
Then, the rates of various approaches differ only in the $O(1)$ term, which quickly vanishes when computing the ratios in $\_{\gamma}$. It is therefore difficult to tell which approach yields a better fairness if the Jain's index were used as the quantitive fairness measure. An alternative measure for fairness is thus proposed below to avoid such disadvantage. 
\begin{defn}
Given the normalized rate vector $\_{\gamma}$ defined in \eqref{eq:normal}, the proposed measure for fairness is 
\beq
F(\_{\gamma}) := 1- \frac{K}{2(K-1)}\norm{\_{\gamma}- \_{e}}_1 = 1 - \frac{K}{2(K-1)} \sum_{k=1}^K \abs{\frac1K - \gamma_k},\label{eq:newfair}
\eeq
where $\norm{\_{\gamma}- \_{e}}_1$ is the $\ell_1$ distance from $\_{\gamma}$ to the equal-rate vector $\_{e}$. \qed 
\end{defn}
\svs

Same as the Jain's index, we have $F(\_{\gamma})=1$ if and only if  $\_{\gamma}=\_{e}$ for the 100 per cent fairness. The factor $\frac{K}{2(K-1)}$ appearing in the second term of $F(\_{\gamma})$ comes from a consideration occurred in the most unfair case. Note that the normalized rate vector $\_{\gamma}$ is nonnegative and has a unit $\ell_1$ norm, i.e. $\_{\bf 1}^\top \_{\gamma}=1$. It is easy to show that the largest $\ell_1$ distance for $\_{\gamma}$ to deviate from $\_{e}$ is 
\[
\sup \left\{ \norm{\_{\gamma}- \_{e}}_1 \ : \ \_{\gamma} \succeq \_{0}, \ \norm{\_{\gamma}}_1=1 \right\} = \frac{2(K-1)}{K}.
\]
Thus, with the additional factor $\frac{K}{2(K-1)}$ in \eqref{eq:newfair} we have normalized the range of fairness function $F(\_{\gamma})$ to $[0,1]$, where the minimum value $0$ now indicates the most unfair distribution of rates among the users. 

The function $F(\_{\gamma})$ also satisfies the axioms of continuity, homogeneity, asymptotic saturation, and monotonicity, that are required for being a plausible fairness measure \cite{conf_LanKCS10}, and is much easier to compute than the Jain's index. The resolution range of $F(\_{\gamma})$ --- from 0 to 100 per cent --- is independent of the number of users and is broader than the Jain's index, which has value $\frac 1K$ for the most unfair case. It is also much more sensitive to the small changes of $\theta$ as shown by the following theorem. 
\begin{thm}
Given the normalized rate vector $\_{\gamma}$, we have 
\beq
F(\_{\gamma}) = 1 - \Theta(\tan(\theta)) = 1 - \Theta(\theta)
\eeq
for small $\theta$ defined in Proposition \ref{prop:1}. 
\end{thm}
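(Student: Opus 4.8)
The plan is to observe that, since $K$ is a fixed constant throughout, the definition \eqref{eq:newfair} gives $1 - F(\_{\gamma}) = \frac{K}{2(K-1)} \norm{\_{\gamma}-\_{e}}_1$, so it suffices to prove $\norm{\_{\gamma}-\_{e}}_1 = \Theta(\tan\theta)$ and $\norm{\_{\gamma}-\_{e}}_1 = \Theta(\theta)$ as $\theta \to 0$, and then multiply through by the constant $\frac{K}{2(K-1)}$.

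First I would pin down an exact relation between $\norm{\_{\gamma}-\_{e}}_2$ and $\theta$. The key point is that the displacement $\_{\gamma}-\_{e}$ is automatically orthogonal to $\_{e}$: both $\_{\gamma}$ and $\_{e}$ have nonnegative entries summing to $1$, so $\_{\bf 1}^\top(\_{\gamma}-\_{e}) = 0$, and since $\_{e} = \frac{1}{K}\_{\bf 1}$ this forces $\< \_{\gamma}-\_{e}, \_{e}\> = 0$. Thus $\_{e}$ and $\_{\gamma}-\_{e}$ are the two legs of a right triangle with hypotenuse $\_{\gamma}$, the angle at $\_{e}$ being exactly $\theta$. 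Reading off the tangent and using $\norm{\_{e}}_2 = 1/\sqrt{K}$,
\[
\tan\theta = \frac{\norm{\_{\gamma}-\_{e}}_2}{\norm{\_{e}}_2} = \sqrt{K}\,\norm{\_{\gamma}-\_{e}}_2 .
\]
Equivalently, one can derive this from Proposition \ref{prop:1}: $\cos^2\theta = \frac{1}{K\norm{\_{\gamma}}_2^2}$ together with the Pythagorean identity $\norm{\_{\gamma}}_2^2 = \norm{\_{e}}_2^2 + \norm{\_{\gamma}-\_{e}}_2^2$ gives the same formula after a short manipulation.

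Next I would invoke the standard norm equivalence on $\R^K$, namely $\norm{\_{v}}_2 \leq \norm{\_{v}}_1 \leq \sqrt{K}\,\norm{\_{v}}_2$ for all $\_{v} \in \R^K$ (the right-hand inequality being Cauchy--Schwarz against $\_{\bf 1}$), applied to $\_{v} = \_{\gamma}-\_{e}$. Combined with the identity above this sandwiches
\[
\tfrac{1}{\sqrt K}\tan\theta \;\leq\; \norm{\_{\gamma}-\_{e}}_1 \;\leq\; \tan\theta ,
\]
so $\norm{\_{\gamma}-\_{e}}_1 = \Theta(\tan\theta)$ with constants depending only on $K$. Hence $1 - F(\_{\gamma}) = \Theta(\tan\theta)$, and the elementary limit $\tan\theta/\theta \to 1$ as $\theta \to 0$ upgrades this immediately to $1 - F(\_{\gamma}) = \Theta(\theta)$, which is the claim.

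There is no real obstacle here; the only step deserving care is the orthogonality/right-triangle argument --- the reason $\tan\theta$ (and not merely $\sin\theta$) comes out cleanly is precisely that the perturbation $\_{\gamma}-\_{e}$ always lies in the hyperplane $\{\_{v}\in\R^K : \_{\bf 1}^\top \_{v} = 0\}$, which is orthogonal to $\_{e}$ --- together with a one-line remark that all hidden $\Theta$-constants depend only on the fixed number of users $K$.
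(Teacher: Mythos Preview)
Your proposal is correct and follows essentially the same route as the paper: compute $\norm{\_{\gamma}-\_{e}}_2$ exactly in terms of $\tan\theta$, then sandwich $\norm{\_{\gamma}-\_{e}}_1$ via the standard $\ell_1$--$\ell_2$ equivalence. The only cosmetic difference is that the paper expands $\norm{\_{e}-\_{\gamma}}_2^2 = \norm{\_{e}}_2^2 + \norm{\_{\gamma}}_2^2 - 2\<\_{e},\_{\gamma}\>$ and plugs in $\norm{\_{\gamma}}_2^2 = \frac{1}{K\cos^2\theta}$ from Proposition~\ref{prop:1}, whereas you reach the same identity $\norm{\_{\gamma}-\_{e}}_2 = \frac{1}{\sqrt{K}}\tan\theta$ via the orthogonality observation $\<\_{\gamma}-\_{e},\_{e}\>=0$ (which you also note is equivalent).
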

\begin{proof}
It follows from the proof of Proposition \ref{prop:1} that 
\bean
\norm{\_{e} - \_{\gamma}}_2 &=& \left[ \norm{\_{e}}_2^2 + \norm{\_{\gamma}}_2^2 - 2 \<\_{e}, \_{\gamma}\> \right]^{\frac 12}\\
& =& \left[\frac 1K + \frac{1}{K (\cos(\theta))^2} - \frac2K \right]^{\frac 12} \\
&=& \frac{1}{\sqrt{K}} \abs{\tan(\theta)}.
\eean
The proof is complete after invoking the following standard inequality between $\ell_1$ and $\ell_2$ norms
\[
\norm{\_{e} - \_{\gamma}}_2 \leq \norm{\_{e} - \_{\gamma}}_1 \leq \sqrt{K} \norm{\_{e} - \_{\gamma}}_2
\]
and hence
\begin{multline*}
F(\_{\gamma}) = 1 - \Theta\left( \norm{\_{e} - \_{\gamma}}_1\right) \\ = 1 - \Theta\left( \norm{\_{e} - \_{\gamma}}_2\right) = 1 - \Theta(\tan(\theta)).
\end{multline*}
\end{proof}

It follows that  the newly proposed fairness measure $F(\_{\gamma})$ varies linearly with $\theta$ when $\_{\gamma}$ is in the neighborhood of $\_{e}$ --- a region of far more interest in practice. Hence, it can provide a better resolution for comparing the fairness achieved by various qualitative approaches. 

\section{Tri-stage approach} \label{sec:tri} 

In the previous section we had discussed four commonly used design-objectives for distributing powers among users, namely, the max sum rate, proportional fairness, harmonic mean and max-min criteria, each offering a different sum rate and a different degree of fairness. The max sum rate problem (cf. \eqref{eq:15} and \eqref{eq:ms}), in particular, can be easily solved by water-filling method and results in the largest possible sum rate. However, when the total power $P$ is small, users with small values of $\abs{\ell_{k,k}}^2$ (cf. \eqref{eq:15}) will not be given any power --- because of the water-filling strategy --- and therefore have zero transmission rate. Hence, the max sum rate criterion has unfortunately the worst fairness-performance among the four. The max-min criterion, on the other hand, offers an undisputed fairness as all users have the same transmission rate,  but it has the smallest sum rate. The other two objectives, the proportional fairness and harmonic mean criteria, provide moderate sum rates and reasonable fairnesses. The aim of this section is to come up with a new design-objective that can offer a larger sum rate and  at the same time better fairness, compared to those obtained by the proportional fairness criterion. The proposed method consists of three stages: cake-cutting, mixing and selection stages. For simplicity, it will be presented in the form of ZFDPC, while it can be easily reformulated for other strategies such as DPC to achieve better performance, but at a cost of much higher computational complexity.

\subsection{The First Stage: Cake-cutting}

In the first stage, we divide the overall power $P$ --- namely, the cake --- into two portions, $cP$ and $(1-c)P$ for some $c \in [0,1]$. The portion $cP$ is first distributed among all users following the max-min objective. This would ensure that all users are included in the broadcast and no one is left behind, thereby establishing a basic guarantee of fairness. Once the ``basic'' need of each user is satisfied, we move on to the second wave of power allocation and distribute the remaining power $(1-c)P$ to all users; only this time we aim to maximize the overall sum rate. 

Specifically, given any $c \in [0,1]$, we first distribute the power $cP$ among all users employing the max-min objective for optimization
\beq
\begin{array}{ll}
\tm{maximize} & \min_k R_{\tm{ZFDPC},k}(\_{p}) \\
 \tm{subject to} & \_{\bf 1}^\top \_{p} \leq cP \tm{ and } \_{p} \succeq \_{0}
\end{array} \label{eq:60}
\eeq
By \eqref{eq:49} the optimal solution to the above problem is 
\beq
\_{p}_1(c) = \frac{cP}{\sum_k \frac{1}{\abs{\ell_{k,k}}^2}} \left[ \frac{1}{\abs{\ell_{1,1}}^2}, \ldots, \frac{1}{\abs{\ell_{K,K}}^2}\right]^\top. 
\eeq
The next step is to distribute the remaining power $(1-c)P$ among all users, aiming to maximize the overall sum rate. Hence, we seek solutions to the following optimization problem
\beq
\begin{array}{lll}
\_{p}_2(c) = \arg & \underset{\_{p}}{\tm{maximize}} & \sum_k R_{\tm{ZFDPC},k} (\_{p} + \_{p}_1(c)) \\
&  \tm{subject to} & \_{\bf 1}^\top \_{p} \leq (1-c)P  \tm{ and }  \_{p} \succeq \_{0}
\end{array} \label{eq:62}
\eeq
Finally, the overall power-allocation vector (parameterized by the splitting factor $c$) is the sum of the power-allocation vectors from both steps,
\beq
\_{p}(c) = \_{p}_{\tm{1}}(c) + \_{p}_{\tm{2}}(c), \label{eq:34}
\eeq
and this completes the first stage of the proposed design. 

The power-allocation vector $\_{p}(c)$ in turn gives transmission rate $R_{\tm{ZFDPC},k}(\_{p}(c))$ of user $k$ and  an overall sum rate
\beq
\tmR_{\tm{sum}}(c) := \sum_k R_{\tm{ZFDPC},k}(\_{p}(c)). \label{eq:35}
\eeq
The corresponding fairness value, measured by the newly proposed $\ell_1$-based function in \eqref{eq:newfair}, equals 
\beq
\tmF(c) := F(\_{\gamma}(c)) = 1 - \frac{K}{2(K-1)}\norm{\_{\gamma}(c) - \_{e}}_1 \label{eq:36}
\eeq
where $\_{e}=\frac 1K \_{\bf 1}$ is the equal rate vector and 
\[
\_{\gamma}(c)=\frac{1}{\tmR_{\tm{sum}}(c)}\left[ R_{\tm{ZFDPC},1}(c), \ldots, R_{\tm{ZFDPC},K}(c) \right]^\top. 
\]
Thus, by the end of this stage we have obtained a wide spectrum of achievable sum rates and fairnesses $\{ (\tmR_{\tm{sum}}(c) , \tmF(c)) : c \in [0,1]\}$. Note that both max sum rate and max-min criteria are covered by setting $c$ equal to $0$ and $1$, respectively.

\subsection{The Second Stage: Mixing} 

The key to the second stage is the observation that wireless channels are often quasi-static, meaning that the channel vectors $\_{h}_k$ can hold constant and unchanged for many channel uses. Within the constant-valued channel block, a fixed choice of $c$ is not necessarily the best for all transmissions, and a mixed strategy of using various $c$'s could actually offer better sum rate and fairness. Finding the best mixed strategy calls for an optimal tradeoff between achievable sum rate and fairness. Given the desired sum rate $\tmR$, the best fairness that our design can offer is characterized by the following function
\begin{multline}
\tmF_{\max}(\tmR) := \sup \left\{ \int_0^1 \tmF(c) w(c) \, d c \ : \right.\\
 \int_0^1 \tmR_\tm{sum}(c) w(c) \, dc = \tmR, \ \int_0^1 w(c)\, dc = 1, \\
\left. \tm{ for all functions $w(c) \in [0,1]$}\right\}, \label{eq:67}
\end{multline}
\alert{which is the outer convex hull of the achievable sum rate-fairness pairs $\{ (\tmR_{\tm{sum}}(c) , \tmF(c)) : c \in [0,1]\}$.} The function $w(c)$ plays the role of time-sharing among all possible ways to cut a cake: the probability of $c$ being used for the cake-cutting is exactly $w(c)$. In other words, for a quasi-static block of $T$ channel uses, the vector $\_{p}(c)$ will be used for power allocation in approximately $T\cdot w(c)$ times. Equation \eqref{eq:67} then seeks the best possible strategy for combining the $c$'s to yield the largest possible degree of fairness, provided that the average sum rate $\tmR$ is to be achieved. 

\subsection{The Third Stage: Selection} \label{sec:sel}

\begin{figure}[t!] 
\[
\includegraphics[width=3in]{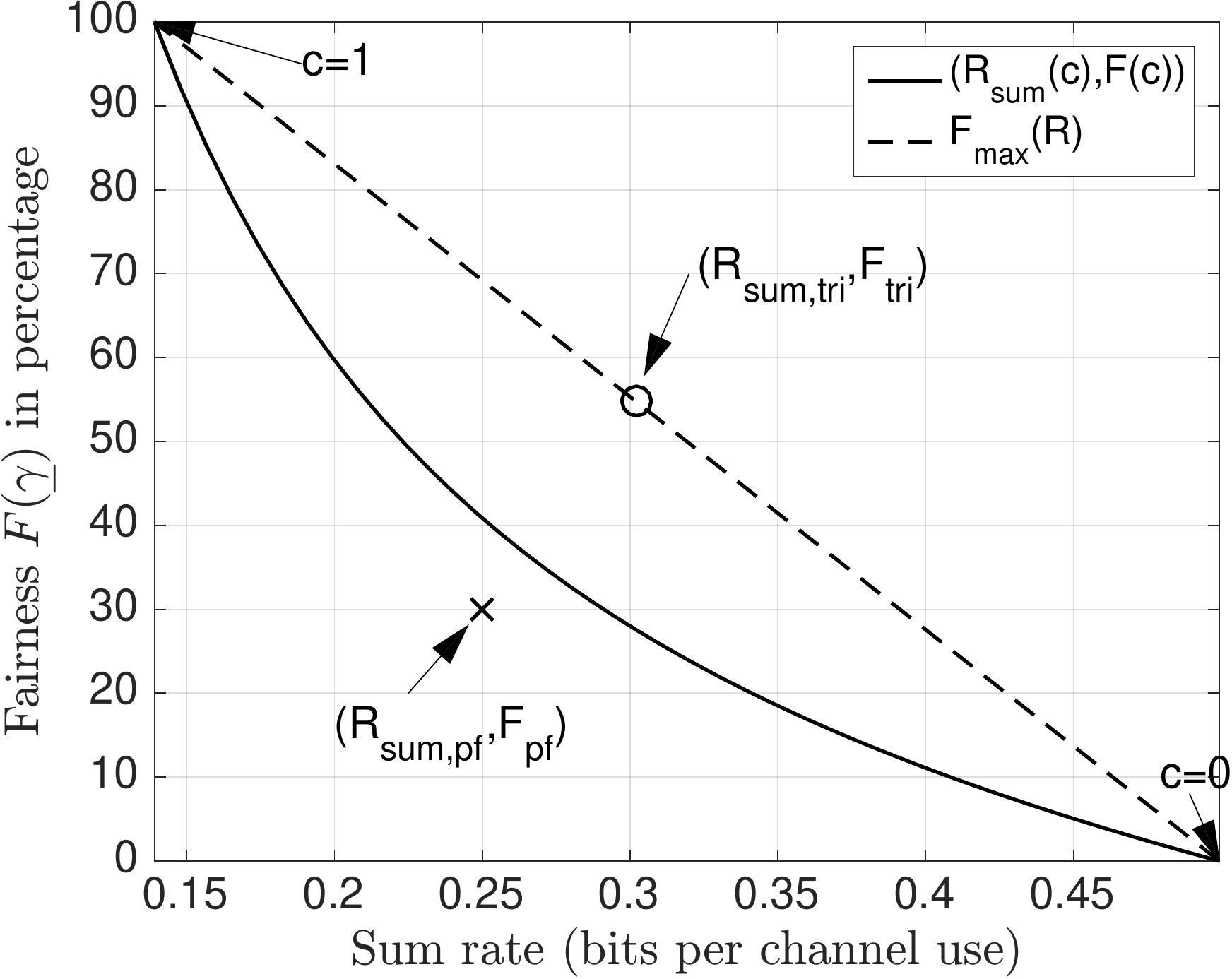}
\]
\caption{An illustration of the proposed selection method.} \label{fig:1}
\end{figure}

The optimal sum rate-fairness tradeoff $\tmF_{\max}(\tmR)$ given in \eqref{eq:67} provides a set of sum rate-fairness pairs $\{ (\tmR,\tmF_{\max}(\tmR)): \tmR \in \R \}$ that can be achieved by our ``mixed'' power-allocation strategy. It does not say which pair should be chosen for operation. In fact, it is impossible to make any rational choice of the operation point, without having an explicit rule dictating quantitatively how much more important the sum rate to the fairness is. The Nash standard \cite{Nash50,BertsimasFT11} argues --- in the qualitative sense of fairness --- that the best sum rate-fairness pair should satisfy that any deviation from the power distribution $\_{p}_{\tm{pf}}$ (cf. \eqref{eq:pf}) would incur a negative aggregate proportional change of the rates (cf. \eqref{eq:17}). The standard does not invoke explicitly (or implicitly) any quantitive measure of fairness; therefore it is not necessarily the best selection rule from the viewpoint of 
 tradeoff $\tmF_{\max}(\tmR)$. Specifically, the proportional fairness criterion yields the following sum rate 
\beq
\tmR_{\tm{sum,pf}} := \sum_k R_{\tm{ZFDPC},k}(\_{p}_{\tm{pf}})
\eeq
and fairness 
\beq
\tmF_{\tm{pf}} := F(\_{\gamma}_{\tm{pf}})
\eeq
where 
\[
\_{\gamma}_{\tm{pf}} = \frac{1}{\tmR_{\tm{sum,pf}}}\left[ R_{\tm{ZFDPC},1}(\_{p}_{\tm{pf}}), \ldots, R_{\tm{ZFDPC},K}(\_{p}_{\tm{pf}}) \right]^\top.
\]
It is easy to show that for $K=2$ 
\beq
\tmF_{\max}(\tmR_{\tm{sum,pf}}) \geq \tmF_{\tm{pf}}, \label{eq:40}
\eeq
meaning that the proportional fairness criterion cannot be any fairer than the tri-stage design, \alert{with respect to the newly proposed fairness measure \eqref{eq:newfair}.} In other words, as far as the two-player problem \cite{Nash50} is concerned, the proposed tri-stage scheme could potentially offer performance better than proportional fairness, which is derived from the Nash standard \cite{BertsimasFT11}. There are at least three implications from the inequality \eqref{eq:40}: 1) we could use the proposed tri-stage method to send information to all users at rate $\tmR_{\tm{sum,pf}}$ --- same as that of proportional fairness --- and obtain a better fairness $\tmF_{\max}(\tmR_{\tm{sum,pf}})$, or 2) we could send at a  rate slighter higher than $\tmR_{\tm{sum,pf}}$ and get a fairness value between $\tmF_{\tm{pf}}$ and $\tmF_{\max}(\tmR_{\tm{sum,pf}}) $, or 3) we could sacrifice a little bit on the rate in exchange for a better fairness. As already said, all is not clear unless we are given a specific rule weighing the importances between sum rate and fairness. On the other hand, when such rule is not available, the proportional fairness criterion remains to be  a good starting point to search for operation points on $\tmF_{\max}(\tmR)$ as the Nash standard has performed very well in many areas of engineering and economics since its first appearance. Thus, given the pair  $(\tmR_{\tm{sum,pf}},\tmF_{\tm{pf}})$ from  the proportional fairness criterion,  we propose to seek a point on curve $\tmF_{\max}(\tmR)$ that offers slightly better performance than 
$(\tmR_{\tm{sum,pf}},\tmF_{\tm{pf}})$ --- a small wish that can be easily granted. The exact proposed operating pair is the following 
\begin{multline}
(\tmR_{\tm{sum,tri}}, \tmF_{\tm{tri}}) := \\ \left\{
\begin{array}{l}
\arg \min_{(\tmR,\tmF_{\max}(\tmR))}  \left[ \abs{\tmR-\tmR_{\tm{sum,pf}}}^2 + \abs{\tmF_{\max}(\tmR) - \tmF_{\tm{pf}}}^2 \right],\\
 \hfill \tm{ if $\tmF_{\max}(\tmR_{\tm{sum,pf}}) \geq \tmF_{\tm{pf}}$},\\\\
(\tmR_{\tm{sum,pf}}, \tmF_{\tm{pf}}), 
\hfill \tm{ if otherwise},
\end{array} \right. \label{eq:73}
\end{multline}
where the second case is just an insurance policy in case our wish is denied, \alert{and where the first case echos the third implication discussed in the above paragraph.} A simple illustration of the proposed selection method is given Figure \ref{fig:1}, where the pairs $(\tmR_{\tm{sum}}(c),\tmF(c))$ are from the cake-cutting stage for some $c \in [0,1]$. It should be noted that the solid line formed by $\{(\tmR_{\tm{sum}}(c),\tmF(c)): c \in [0,1]\}$ is in general not convex; we only make it so for the illustration purpose. The dashed line represents the optimal sum rate-fairness tradeoff $\tmF_{\max}(\tmR)$ resulting from the mixing stage. The point marked by an X is  $(\tmR_{\tm{sum,pf}}, \tmF_{\tm{pf}} )$ obtained by the proportional fairness criterion; the  point marked by an O is the proposed operating point $(\tmR_{\tm{sum,tri}}, \tmF_{\tm{tri}})$ given by the selection rule \eqref{eq:73} and can be achieved by using a statistical power allocation method outlined in Algorithm \ref{alg:1}. 

\begin{algorithm}[!t]
\caption{The Proposed Statistical Power Allocation Strategy to Achieve $(\tmR_{\tm{sum,tri}}, \tmF_{\tm{tri}})$} \label{alg:1}
\begin{algorithmic}[1]
\REQUIRE The desired average sum rate $\tmR_{\tm{sum,tri}}$ and desired average fairness $\tmF_{\tm{tri}}$ with $\tmF_{\max}(\tmR_{\tm{sum,tri}})=\tmF_{\tm{tri}})$
\STATE Find an optimal solution $w^*(c)$ to the optimization problem \eqref{eq:67}, i.e. the nonnegative-valued function $w^*(c)$ satisfies
\bean
&& \int_0^1 \tmF(c) w^*(c) \, d c = \tmF_{\tm{tri}}, \\
&& \int_0^1 \tmR_\tm{sum}(c) w^*(c) \, dc = \tmR_{\tm{sum,tri}}, \ \tm{ and } \\
&& \int_0^1 w^*(c)\, dc = 1,
\eean
where $\tmR_\tm{sum}(c)$ and $\tmF(c) $ are defined in \eqref{eq:35} and \eqref{eq:36}, respectively. 
\STATE Generate a random value $C \in [0,1]$ based on the  probability density function $w^*(c)$. Use the power allocation vector $\_{p}(C)$ (cf. \eqref{eq:34}) obtained from the first stage to distribute the total power $P$ among users and perform broadcast communication using ZFDPC. 
\STATE Repeat Step 2 for various $C$'s within the quasi-static channel block. 
\end{algorithmic}
\end{algorithm}

To summarize, the proposed tri-stage design first finds a tradeoff $\tmF_{\max}(\tmR)$ between the sum rate and fairness based on a {\em statistical} two-step cake-cutting method --- the parameter $c$ from the first stage is seen as a random variable in the second stage. We then use the proportional fairness as a starting point to choose the operating point $(\tmR_{\tm{sum,tri}}, \tmF_{\tm{tri}})$ on $\tmF_{\max}(\tmR)$. \alert{For quasi-static channels,} the proposed design also offers a statistical power-allocation method for MISO BC using ZFDPC (or any other communication schemes). This is in sharp contrast to all current communication schemes in which only a single, deterministic power-allocation is used. It will be seen from simulation results in Section \ref{sec:sim} that the proposed statistical power-allocation could offer not only larger sum rate but also better fairness than the commonly used design objectives.

\section{Further Remarks and Performance Results}

\subsection{Upper Bound on Rate-Fairness Tradeoff: Rate-Split} 

Given the channel vectors $\{ \_{h}_k\}$ of all users (or equivalently the channel matrix $H$ in \eqref{eq:H}) we have shown that the base-station can employ either DPC, ZFDPC, or other coding schemes to simultaneously send information to all $K$ users under a total power constraint. The proposed tri-stage  design-objective, as outlined in Section \ref{sec:tri}, provides an efficient strategy to distribute power among all users, taking both sum rate and fairness into consideration. When the channel is quasi-static, we further showed a mixed strategy can provide performance, in terms of both sum rate and fairness, much better than the commonly used design objectives. The mixed strategy also gives an optimal tradeoff --- in terms of the tri-stage design --- between the sum rate and fairness. The tradeoff is explicitly characterized by the function $\tmF_{\max}(\tmR)$, and has an implicit dependence upon the channel matrix $H$. In this section we will focus on such dependence and will write the tradeoff as $\tmF_{\max}(H,\tmR)$  for emphasis. 

Given the channel matrix $H$, the tradeoff $\tmF_{\max}(H,\tmR)$ is optimal when the tri-stage design-objective is employed and when the coding is restricted to take place within the channel block where $H$ is fixed. In other words, for a series of channel blocks $H_1, H_2, \ldots, H_L$, the proposed tri-stage approach gives the sum rate-fairness pair $(\tmR_{\tm{sum,tri}}(H_\ell),\tmF_{\tm{tri}}(H_\ell))$ at the $\ell$-th channel block $H_\ell$, and achieves the following asymptotic average sum rate and fairness for an ergodic channel
\bea
\bar{\tmR}_{\tm{sum,tri}} &:=& \lim_{L \to \infty} \frac{1}{L} \sum_{\ell=1}^L \tmR_{\tm{sum,tri}}(H_\ell) = \E_H \tmR_{\tm{sum,tri}}(H), \label{eq:42}\\
\bar{\tmF}_{\tm{tri}} &=& \lim_{L \to \infty} \frac{1}{L} \sum_{\ell=1}^L \tmF_{\tm{tri}}(H_\ell) = \E_H \tmF_{\tm{tri}}(H). \label{eq:43}
\eea
On the other hand, if  coding is allowed to take place across multiple channel blocks, then we can have different sum rate for each block, as long as their asymptotic average still equals $\bar{\tmR}_{\tm{sum,tri}}$. This could potentially lead to a larger value of fairness. Specifically, given a series of $L$ channel realizations, $H_1, H_2, \ldots, H_L$, let 
\beq
R_{\tm{sp}}^{(L)} : (H_1, \ldots, H_L) \mapsto (R_{\tm{sum,1}},\ldots,R_{\tm{sum,L}})
\eeq
be a rate-split function that assigns sum rate $R_{\tm{sum},\ell}$ to the $\ell$-th channel block $H_\ell$. With the tri-stage approach and rate-splitting strategy $R_{\tm{sp}}^{(L)}$, it achieves average rate $\frac{1}{L} \sum_\ell R_{\tm{sum},\ell}$ and average  fairness $\frac{1}{L} \sum_\ell \tmF_{\max}(H_\ell, R_{\tm{sum},\ell})$.  Therefore, for any desired average sum rate $\bar{R}_{\tm{sum}}$ we could optimize over all possible rate-split functions $R_{\tm{sp}}^{(L)}$  to obtain a better fairness. In other words, if the mixing stage of  tri-stage design is allowed to take place in multiple channel blocks of an ergodic quasi-static channel, then the optimal sum rate-fairness tradeoff is given by
\begin{multline}
\bar{\tmF}_{\max}^\star(\bar{\tmR}_{\tm{sum}}) = \limsup_{L \to \infty} \biggl\{ \frac{1}{L} \sum_{\ell=1}^L {\tmF}_{\max}(H_\ell, R_{\tm{sum},\ell}) \ : \\
R_{\tm{sp}}^{(L)} (H_1, \ldots, H_L)= (R_{\tm{sum,1}},\ldots,R_{\tm{sum,L}}),\\
R_{\tm{sum,1}}+\cdots +R_{\tm{sum,L}} = L \bar{\tmR}_{\tm{sum}} \biggr\}
\label{eq:75}
\end{multline}
where the supremum is taken over all possible rate-split functions $ R_{\tm{sp}}^{(L)}$ having average sum rate $\bar{\tmR}_{\tm{sum}}$. We remark that the function $\bar{\tmF}_{\max}^\star(\bar{\tmR}_{\tm{sum}}) $ can be easily evaluated using Monte-Carlo methods. 

This ultimate tradeoff $\bar{\tmF}_{\max}^\star(\bar{\tmR}_{\tm{sum}})$, though has performance superior to the single-block achievable pair $(\bar{\tmR}_{\tm{sum,tri}}, \bar{\tmF}_{\tm{tri}})$, is in fact unachievable in practice. Achieving $\bar{\tmF}_{\max}^\star(\bar{\tmR}_{\tm{sum}})$ requires the non-causal knowledge of $H_1, H_2, \ldots,$ at the base-station, hence we shall regard $\bar{\tmF}_{\max}^\star(\bar{\tmR}_{\tm{sum}})$ as the upper bound for the rate-fairness tradeoff. 

The ultimate tradeoff $\bar{\tmF}_{\max}^\star(\bar{\tmR}_{\tm{sum}})$ is also an excellent performance benchmark for the selections of operating sum rate-fairness pairs from the single-block tradeoff $\tmF_{\max}(\tmR)$. Recall that in the last stage of the proposed tri-stage approach we were asked to decide which operating point from the single-block tradeoff $\tmF_{\max}(\tmR)$ should be chosen for transmission, and we made our choice based on the reputable proportional fairness --- choosing the point on $\tmF_{\max}(\tmR)$ that is the closest to the point of proportional fairness. Our intuitive choice achieves an average sum rate $\bar{\tmR}_{\tm{sum,tri}}$ (cf. \eqref{eq:42}) and an average fairness $\bar{\tmF}_{\tm{tri}} $ (cf. \eqref{eq:43}) in the long run. The ultimate tradeoff \eqref{eq:75} then helps us to see how far off  the performance of our choice to that of the best possible selection (if there is one) is. The tradeoff $\bar{\tmF}_{\max}^\star(\bar{\tmR}_{\tm{sum}})$ also serves as the ultimate benchmark for all possible selection schemes that can be considered and/or be eventually employed in the last stage of the proposed design objective. 

\subsection{Performance Results} \label{sec:sim}

\begin{figure}[t!] 
\[
\begin{array}{c}
\includegraphics[width=\columnwidth]{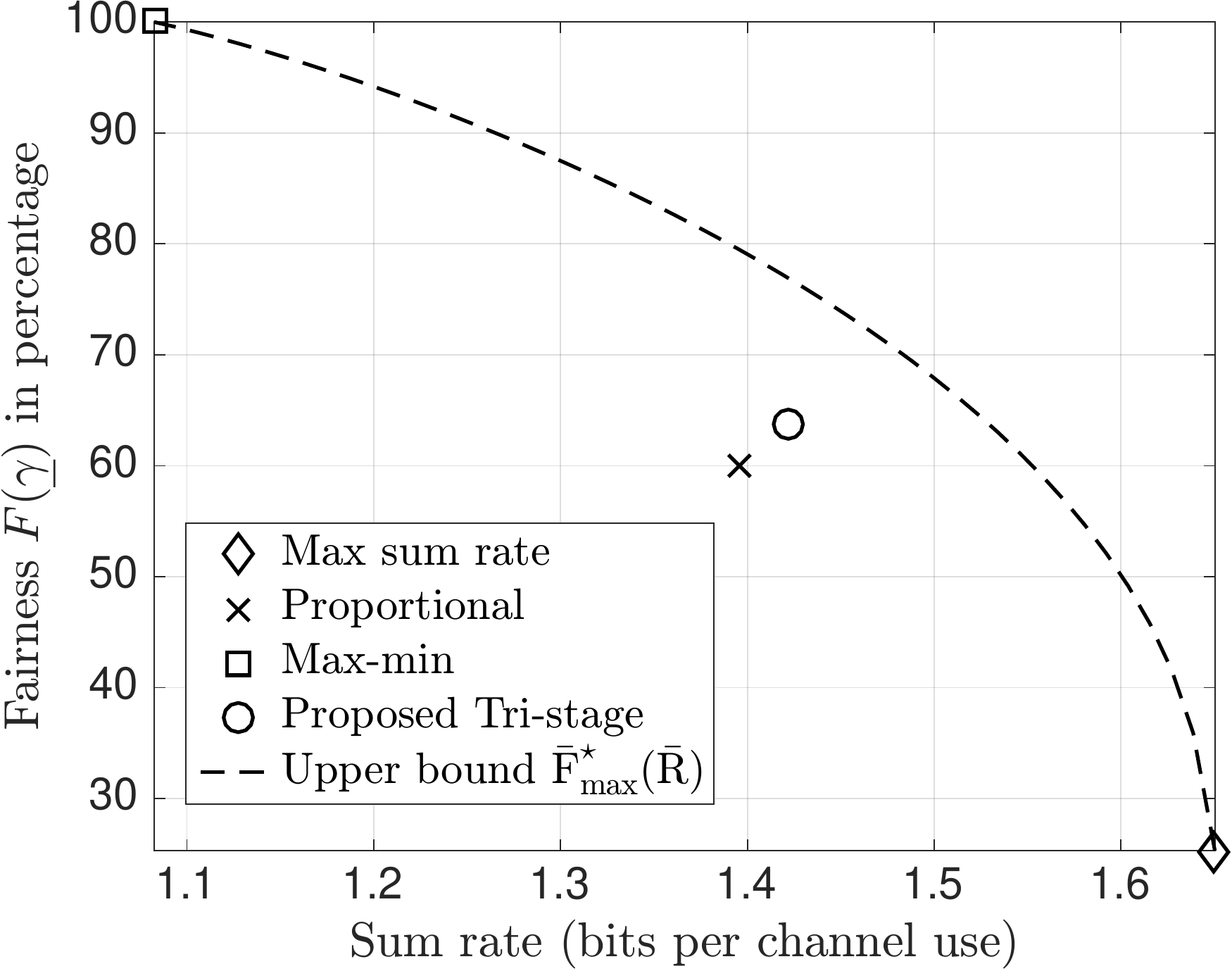} \\
\tm{(a)} \\
\includegraphics[width=\columnwidth]{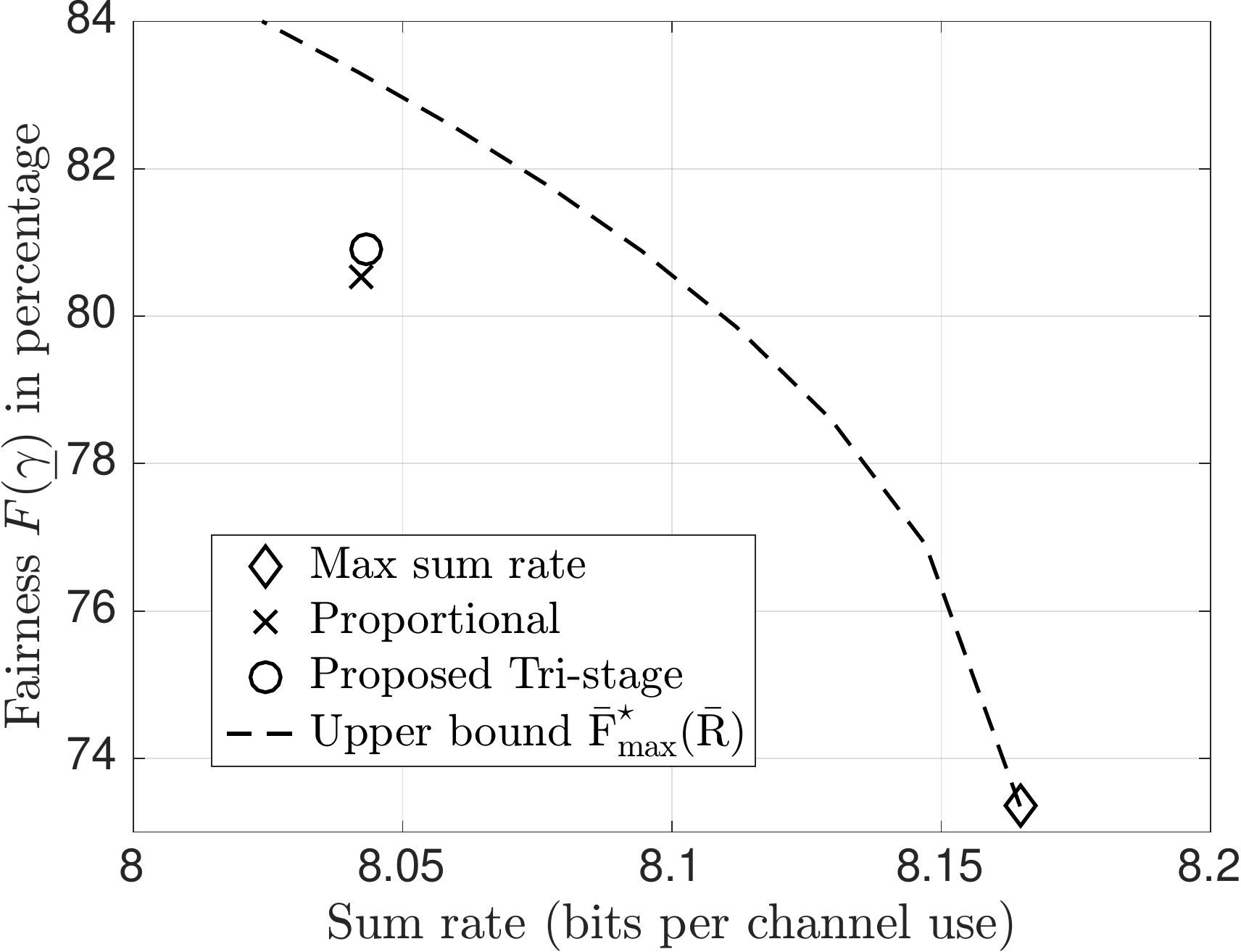}\\
\tm{(b)}
\end{array}
\]
\caption{Achieved average sum rates and fairness values of ZFDPC using max sum rate, proportional fairness, max-min and the proposed tri-stage criteria when $K=2$ users, $N=2$ transmit antennas, and (a) $P=0$ dB and (b) $P=15$dB.} \label{fig:2}
\end{figure}

Two simulation results are presented to demonstrate the performance proposed tri-stage design. While the design is general and can be integrated into any coding scheme used for wireless network communication, here the simulations are performed for MISO BC using 
ZFDPC, due to its simplicity for implementation. In particular, we first consider the case of $K=2$ single-antenna users and $N=2$ transmit antennas at the base-station. The results are given in Figure \ref{fig:2} for (a) $P=0$ dB and (b)$P=15$ dB, representing the performance in low and moderate SNR regimes. In both cases, the sum rate-fairness pairs obtained from the max sum rate criterion always achieve the largest sum rate but at the cost of low fairness value. The level of unfairness is much more pronounced when the total power $P$ is small, as one of the two users could be given zero power and be excluded from communication. When $P$ is large, the encoder has more resources for power distribution, so both users are included in the transmission, and an acceptable --- but not satisfactory --- level of fairness is achieved. The max-min criterion, on the other hand, aims to provide an equal transmission rate to both users,  hence the fairness value is always at 100 percent, but at a cost of certain loss in sum rate. The sum rate-fairness pair derived from the proportional fairness has sum rate smaller than that of max sum rate but much fairer to the users, especially in the low SNR regime.  In both cases, it is seen that the proposed tri-stage criterion has better performance than the proportional fairness, not only a larger sum rate but also better fairness. \alert{The performance gains, however, come at a price of higher computational complexity, since the proposed tri-stage design involves an optimization in the mixing stage to find the best time-sharing among all possible cake cuttings.}

All the achievable sum rate-fairness pairs are upper bounded by the ultimate tradeoff $\bar{\tmF}_{\max}^\star(\bar{\tmR}_{\tm{sum}})$, as achieving the latter calls for the noncausal knowledge of all future channel realizations. It is also seen from Figure \ref{fig:2}.(b) that in moderate SNR regime both proportional fairness and tri-stage designs are fairly close to the ultimate bound $\bar{\tmF}_{\max}^\star(\bar{\tmR}_{\tm{sum}})$, leaving little room for further improvement. But for low SNR regime as in Figure \ref{fig:2}.(a),  it appears possible to improve the rate-selection method in the last stage to yield a better sum rate-fairness pair. 

\alert{A similar simulation for a much more complex case of $K=8$ single-antenna users and $N=8$ transmit antennas transmitting at $P=0$ dB  at base-station is given in Figure \ref{fig:4}. It can be seen that the proposed tri-stage method not only  is much fairer than the proportional fairness but also provides a slightly higher sum rate. }

\begin{figure}[t!] 
\[
\includegraphics[width=\columnwidth]{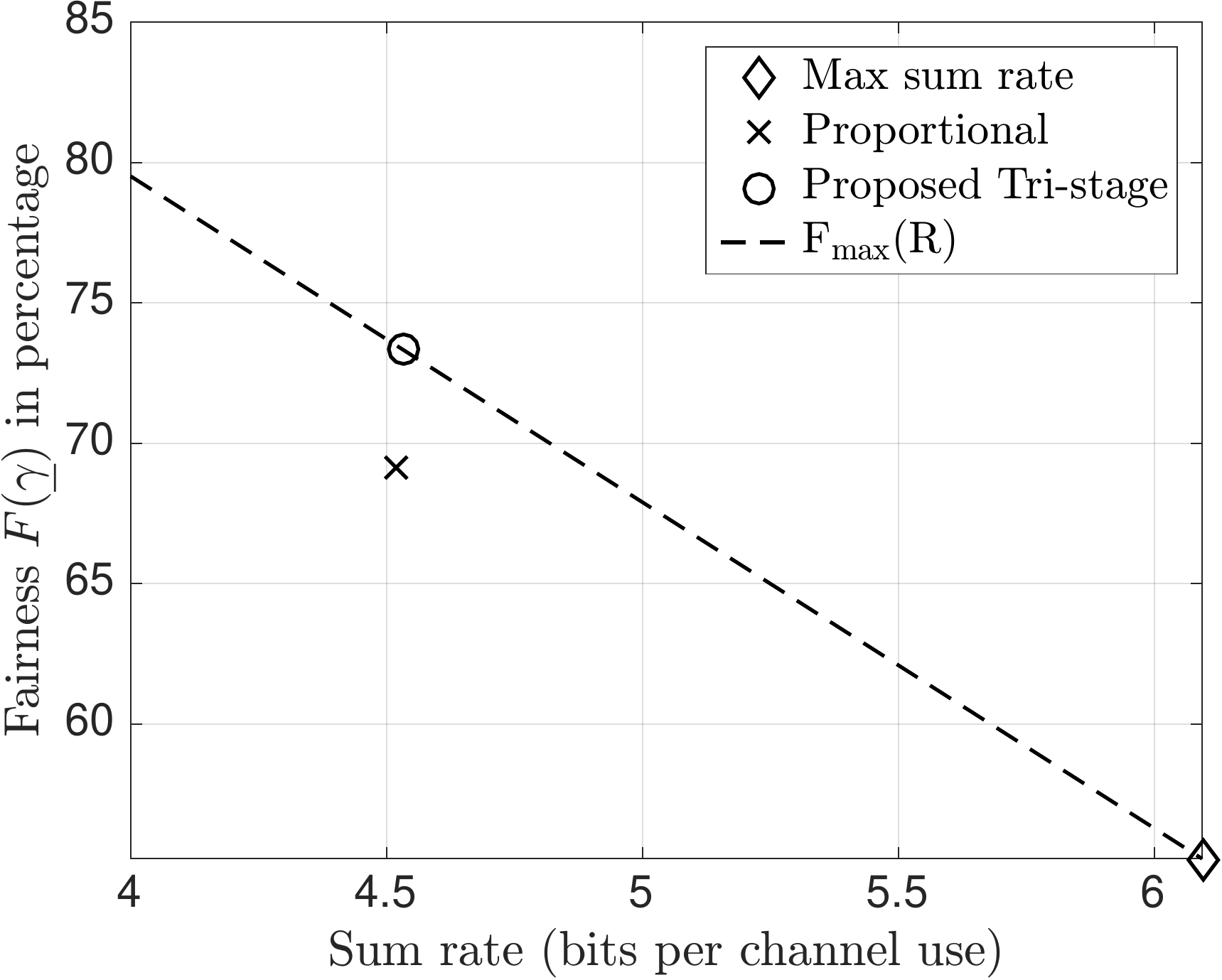} 
\]
\caption{Achieved average sum rates and fairness values of ZFDPC using max sum rate, proportional fairness and the proposed tri-stage criteria when there are $K=8$ users and $N=8$ transmit antennas at $P=0$ dB.} \label{fig:4}
\end{figure}

\section{Conclusion} \label{sec:con}

In this paper we first investigated the tradeoff between sum rate and fairness for MISO BC. The achievable sum rates were based on DPC or ZFDPC, subject to an individual power constraint. Several qualitative approaches for fairness, such as max sum rate, proportional fairness, harmonic mean and max-min, were also discussed. For the quantitive approaches, we showed that the widely used Jain's index could become insensitive at high SNR regime and hence proposed an $\ell_1$-norm based fairness measure that can compare the fairness levels achieved by various design objectives at a much finer resolution. We also introduced a new tri-stage design objective that is based on a new concept of statistical power allocation, in sharp contrast to the fixed, deterministic method used in all existing wireless/wired communication systems. The new scheme randomly allocates powers to users based on an optimal probability distribution derived from the tradeoff between sum rate and fairness. \alert{We also remark that the proposed tri-stage design objective can be easily extended to MIMO BC with successive zero forcing DPC \cite{DabbaghL07} as well as zero-forcing beamforming methods.} Simulation results showed that the proposed approach can simultaneously achieve a larger sum rate and better fairness than the reputable proportional fairness. A performance upper bound was also given in the paper to show that there might still be rooms for further improvement, especially in the low SNR regime. \alert{Finally, it is worth to note that the ordering of users also has some impact on fairness and sum rate. For traditional DPC, all user-orderings have the same sum rate but possibly different fairness values. For ZFDPC, both sum rate and fairness value could change as the ordering of users varies. How to use the ordering of users to improve sum rate and fairness still calls for further research.}


\end{document}